 \newtheorem{proposition}{Proposition}
\begin{document}
\title{Band Allocation for Cognitive Radios with Buffered Primary and Secondary Users}
\author{ Ahmed El Shafie$^\dagger$, Ahmed Sultan$^\star$, Tamer Khattab$^*$\\
\small \begin{tabular}{c}
$^\dagger$Wireless Intelligent Networks Center (WINC), Nile University, Giza, Egypt. \\
$^\star$Department of Electrical Engineering, Alexandria University, Alexandria, Egypt.\\
$^*$Electrical Engineering, Qatar University, Doha, Qatar. \\
\end{tabular}
}
\date{}
\maketitle
\thispagestyle{empty}
\pagestyle{empty}
\begin{abstract}
In this paper, we study band allocation of $\mathcal{M}_s$ buffered secondary users (SUs) to $\mathcal{M}_p$ orthogonal primary licensed bands, where each primary band is assigned to one primary user (PU). Each SU is assigned to one of the available primary bands with a certain probability designed to satisfy some specified quality of service (QoS) requirements for the SUs. In the proposed system, only one SU is assigned to a particular band. The optimization problem used to obtain the stability region's envelope (closure) is shown to be a linear program. We compare the stability region of the proposed system with that of a system where each SU chooses a band randomly with some assignment probability. We also compare with a fixed (deterministic) assignment system, where only one SU is assigned to one of the primary bands all the time. We prove the advantage of the proposed system over the other systems.

\end{abstract}
\begin{IEEEkeywords}
Cognitive radio, closure, stability region, linear programming, Birkhoff algorithm, queue stability.
\end{IEEEkeywords}
\section{Introduction}
There is a recent dramatic increase in the demand for radio spectrum, stimulated by the enormous influx of new wireless devices and applications. The cognitive radio communications paradigm allows a more effective and efficient use of the electromagnetic spectrum. Cognitive or secondary users (SUs) utilize the spectrum when it is unused by the primary or licensed system. The question arises as to how the SUs access a primary channel while satisfying some quality of service (QoS) specifications. The design of an efficient medium access control (MAC) protocol to assign the SUs to the available primary bands is very crucial.

The problem of band allocation in a cognitive radio setting has been studied in many works \cite{liu2008distributed,liu2008randomized,digham2008joint,lu2009optimal,gai2010learning,lai2011cognitive,shiang2008queuing,queues}. In order to avoid convergence to the same channels, \cite{liu2008distributed} proposes a simple randomized sensing policy where the channel selection probability by each SU is determined by its belief, which is the conditional probability, given all past decisions and observations, that the channels are in a particular state of occupancy by the primary users (PUs). In \cite{liu2008randomized}, the probability to sense each channel is assigned to every SU, and the sensing policy is formulated as an optimization problem over all combinations of the assignment probabilities to maximize the total throughput of the network.
The work in \cite{digham2008joint} investigates the case where a set of channels is distributed among multiple secondary nodes
that opportunistically access the available spectrum. The solution of the band allocation problem is obtained via maximizing the total sum capacity of the cognitive radio network.  By introducing an interference
temperature constraint for guaranteeing PUs' QoS, the authors of \cite{lu2009optimal} proposed an optimal subcarrier and
power allocation algorithm to maximize the overall utility
for SUs. In \cite{lai2011cognitive}, a cognitive medium access protocol is proposed for uncertain environments where the PU traffic statistics are unknown a priori and have to be learned and tracked. In the case of multiple SUs, the channel selection is formulated as an optimization problem for cooperative SUs and a non-cooperative game for selfish SUs, respectively.

The presence of data queues in the system has not been considered in the aforementioned works. Resource allocation involving buffer dynamics in a cognitive setting has been considered in a few works such as \cite{shiang2008queuing} and \cite{queues}. In \cite{shiang2008queuing}, a dynamic channel-selection for autonomous wireless users is proposed, where each user has set of actions and strategies. Based on the priority queueing analysis (i.e., priority classes among SUs), each wireless
user can evaluate its utility impact based on the behaviors of
the users deploying the same frequency channel including the
PUs. The work in \cite{queues} investigates the resource allocation problem for the downlink of an OFDMA-based cognitive radio network. Prior to the beginning of each frame each user transmits to the base station its sensing
information vector as well as its latest channel gain
vector which was obtained based on pilot symbols. Based on the received information from the users and the current backlog for each
user, the base station performs resource allocation for the frame. The resource allocation map is then sent to the users and is
valid for the remainder of the frame, which is composed of multiple time slots.

In this work, we consider buffered terminals, time slotted channels and include the impact of channel outage on the system's performance.
We also do not assume the availability of channel side information (CSI) at the transmitting terminals. Specifically, we consider a time-slotted primary channels over which each PU transmits starting at the beginning of the time slot whenever it has packets to communicate. Each PU is assigned singly to one of the bands. In the proposed system, denoted by $\mathcal{S}$, each band has at most one SU. The SUs are probabilistically assigned to the $\mathcal{M}_p$ bands at the beginning of each time slot. When an SU is assigned to a band, it has to sense that band for $\tau$ seconds relative to the beginning of the time slot to detect the activity of the PU which owns that band. Varying the assignment probabilities, we can obtain the maximum stable throughput region for the secondary network.

We make the following contributions in this paper.
\begin{itemize}
\item We study resource allocation in a cognitive radio network with buffered users and propose a channel allocation method for the SUs which results in probabilistic assignment with each primary channel allocated to one SU at maximum.
\item We provide the exact maximum stable throughout region of the proposed system, which is obtained via solving a linear program.
    \item We provide proofs for the advantage of the proposed system, in terms of service rates, over systems with fixed channel assignments and systems where each SU can randomly choose (access) any band and, hence, collisions among the SUs may occur over a primary channel.
\end{itemize}

The rest of the paper is organized as follows. In the following section,
we describe the system model considered in this paper. The stability region of the proposed system is obtained in Section \ref{system_S}. System where each SU can randomly select any band and the fixed assignment system are discussed in Section \ref{comparisonsection}. We present some numerical results for the optimization problems presented in this paper in Section \ref{numerical} and conclude the paper in Section \ref{conc}.

\section{system model}
\label{system_model}
We propose a cognitive radio system, denoted by $\mathcal{S}$, in which the SUs are assigned to $\mathcal{M}_p$ licensed orthogonal
frequency bands over which the PUs operate in a time-slotted fashion. The primary band $B_j$ has bandwidth $W_{j}$, where in general $W_j\ne W_i$ for all $j\ne i$ and $j,i\in \{1,2,\dots,\mathcal{M}_p\}$. The secondary network consists of a finite number $\mathcal{M}_s$ of terminals numbered
$1,2,\dots,\mathcal{M}_s$. Each terminal, whether primary or secondary, has an infinite queue for storing fixed-length packets. The $j${\it th} PU, $p_j$, has a queue denoted by $Q_{{p}_{j}}$, whereas the $k${\it th} SU, $s_k$, has a queue denoted by $Q_{s_k}$. We adopt a discrete-time late arrival model, which means that a newly arrived packet during a particular time slot cannot be transmitted during the slot itself even if the queue is empty. Arrival processes at all queues are Bernoulli random variables that are independent across terminals and independent from slot to slot \cite{sadek}. The mean arrival rate at $Q_{p_j}$ is $\lambda_{ p_j}$ and at $Q_{s_k}$ is $\lambda_{s_k}$. If a terminal transmits during a time slot, it sends exactly one packet to its receiver.

A PU, $p_j$, assigned to band $B_j$, transmits the packet at the head of its queue starting at the beginning of the time slot. The SUs access the channel as follows. Each SU senses the band assigned to it for a duration of $\tau$ seconds, which is assumed to be a fraction of the slot duration, $T$. We assume that $\tau$ is chosen such that the probability of an erroneous secondary decision regarding primary activity is negligible. If the band is sensed to be free from primary activity, the SU, which is assigned to this band, transmits till the far end of the time slot. Note that the transmission time is $T-\tau$ not $T$, but it still transmits one full packet. This can be implemented by the terminal via adjusting its transmission rate, e.g., by using a signal constellation with more symbols or by increasing the channel coding rate or both. Note that by doing this, the probability of link outage increases. This is the price of transmission delay relative to the beginning of the time slot and it is exactly quantified at the end of this section.

For system $\mathcal{S}$, each band has at most one SU, and each SU is assigned only to one band. Let $\omega_{jk}$ denote the probability that $s_k$ is assigned to the $j${\it th} band. It is evident that we have two constraints. The first constraint is
\begin{equation}
\begin{split}
\sum_{k=1}^{\mathcal{M}_s}\omega_{jk}\le 1, \forall j\in\{1,\dots,\mathcal{M}_p\}
\end{split}
\label{constraints_omega1}
\end{equation}
The inequality holds to equality in case $\mathcal{M}_s \ge \mathcal{M}_p$. The second constraint is
\begin{equation}
\begin{split}
\sum_{j=1}^{\mathcal{M}_p}\omega_{jk}\le1, \forall k\in\{1,\dots,\mathcal{M}_s\}
\end{split}
\label{constraints_omega}
\end{equation}
where the inequality holds to equality in case $\mathcal{M}_p \ge \mathcal{M}_s$. Note that both constraints become equalities if and only if $\mathcal{M}_p=\mathcal{M}_s$.

If the number of SUs is greater than the available primary bands, and since our protocol does not allow multiple assignment of users to the same band, we can assume the presence of $\mathcal{M}_s\!-\!\mathcal{M}_p$ virtual bands with zero bandwidth. Thus, the service rate on any of these bands is exactly equal to zero. We define probability $q(m_1,m_2,\dots,m_{\mathcal{M}_s})$ as the probability that $s_1$ is assigned to the primary band $m_1$ and user $s_2$ is assigned to the primary band $m_2$ and so on, where $m_k\in\{1,2,\dots,\mathcal{M}_p\}$ for all $k=1,2,\dots,\mathcal{M}_s$ if $\mathcal{M}_p \ge \mathcal{M}_s$, and $m_k\in\{0,1,2,\dots,\mathcal{M}_p\}$ if $\mathcal{M}_s > \mathcal{M}_p$ with $m_k=0$ meaning that the SU is assigned to a virtual band. It is evident that the assignments are the permutation without repetition of choosing $\mathcal{M}_s$ elements out of $\mathcal{M}_p$ elements, if $\mathcal{M}_p \ge \mathcal{M}_s$, or choosing $\mathcal{M}_p$ elements out of $\mathcal{M}_s$ elements, if $\mathcal{M}_s \ge \mathcal{M}_p$. The total number of the assignments is given by
\begin{equation}
\mathbb{P}=\bigg[\frac{\mathcal{M}_p!}{(\mathcal{M}_p-\mathcal{M}_s)!}\bigg]^{\mathds{1}\left(\mathcal{M}_p\ge \mathcal{M}_s\right)}\bigg[\frac{\mathcal{M}_s!}{(\mathcal{M}_s-\mathcal{M}_p)!}\bigg]^{\mathds{1}\left(\mathcal{M}_s> \mathcal{M}_p\right)}
\end{equation}
where $\mathds{1}\left(.\right)$ is the indicator function and $r!$ denotes the factorial of $r$. It is clear that the summation over these probabilities satisfies the constraint
\begin{equation}
\sum_{(m_1,m_2,\dots,m_{\mathcal{M}_s})}q(m_1,m_2,\dots,m_{\mathcal{M}_s})= 1
\end{equation}
 The probability that band $B_j$ is free/available is the probability that the primary queue assigned to the band is empty, which is given by\footnote{This formula is followed from solving the Markov chain of the primary queue under the late-arrival model described in Section \ref{system_model} \cite{sadek}.}
\begin{equation}
\pi_{j}=1-\frac{\lambda_{p_j}}{\mu_{p_j}}
\label{eqn1}
\end{equation}
where $\mu_{p_j}$ is the mean service rate of $p_j$ and it is given by the complement of the outage event of the channel between the primary transmitter $p_j$ and its respective receiver under perfect sensing assumption.

We summarize MAC operation of system $\mathcal{S}$ as follows.
  \begin{itemize}
  \item At the beginning of the time slot, the PUs with nonempty queue transmit the packet at the head of their queues, and a band is assigned to one and only one SU.
  \item The SUs sense the channel for $\tau$ seconds from the beginning of the time slot. A secondary transmitter with a nonempty queue transmits the packet at the head of its queue if the band is sensed to be free.
      \item  A feedback message from the respective receiver at the end of each time slot indicates the corresponding transmitter about the decodability status of the transmitted packet.
       \item If the respective destination decodes the packet successfully, it sends
back an acknowledgement (ACK), and the packet is removed from the system.
\item If the respective destination fails to decode the packet due to channel outage, it sends back a negative-acknowledgement (NACK), and the packet is retransmitted at the following time slot.
  \end{itemize}

We adopt a flat fading channel model and assume that the channel gains remain constant over the duration of the time slot. We do not assume the availability of the CSI at the transmitting terminals.  Assuming that the number of bits in a packet is $b$, the transmission rate of the secondary transmitter $s_k$ is
\begin{equation}
r_{s_k}=\frac{b}{T-\tau}
\label{r_i}
\end{equation}
\noindent Outage occurs when the transmission rate exceeds the channel capacity \cite{sadek}
\begin{equation}
{\rm Pr}\bigg\{O_{i,s_k}\bigg\}=P_{{\rm out},is_k}={\rm Pr}\biggr\{r_{s_k} > W_i \log_{2}\left(1+\gamma_{s_k} \alpha_{is_k}\right)\biggr\}
\end{equation}
\noindent where $O_{i,s_k}$ is the event of channel outage when the $i${\it th} band is assigned to user $s_k$, $W_i$ is the bandwidth of the $i${\it th} band, $\gamma_{s_k}$ is the received signal-to-noise-ratio (SNR) at user $s_k$ receiver when the channel gain is equal to unity, and $\alpha_{is_k}$ is the channel gain when user $s_k$ is assigned the $i${\it th} band, which is exponentially distributed in the case of Rayleigh fading. The outage probability can be written as
\begin{equation}
P_{{\rm out},i{s_k}}={\rm Pr}\Biggr\{\alpha_{is_k}<\frac{2^{\frac{r_{s_k}}{W_i}}-1}{\gamma_{s_k}}\Biggr\}
\end{equation}
\noindent Assuming that the mean value of $\alpha_{i s_k}$ is $\sigma^2_{s_k}$, $P_{{\rm out},is_k}\!=\!1\!-\!\exp\bigg(-\frac{2^{\frac{r_{s_k}}{W_i}}-1}{\gamma_{s_k}\sigma^2_{s_k}}\bigg)$ for a Rayleigh fading channel. Let $\overline{P}_{{\rm out},is_k}=1-P_{{\rm out},is_k}$\footnote{Throughout the paper $\overline{z}=1-z$.}
 be the probability of the complement event $\overline{O}_{i,s_k}$. This probability of {\bf correct} packet reception is therefore given by
\begin{equation}
\overline{P}_{{\rm out},is_k}=\exp\bigg(-\frac{2^{\frac{b}{TW_i\left(1-\frac{\tau}{T}\right)}}-1}{\gamma_{s_k}\sigma^2_{s_k}}\bigg)
\label{corrprob}
\end{equation}
Note that the virtual bands are of unity outage probability because the available bandwidth is zero. The packet correct reception probability of user $p_i$ transmitting to its respective receiver is given by a similar formula as in Eqn. (\ref{corrprob}) with the respective primary parameters. Mathematically,
\begin{equation}
\overline{P}_{{\rm out},ip_i}=\exp\bigg(-\frac{2^{\frac{b}{TW_i}}-1}{\gamma_{p_i}\sigma^2_{p_i}}\bigg)
\label{corrprob2}
\end{equation}
\section{Stability Analysis of System $\mathcal{S}$}\label{system_S}
A fundamental performance measure of a communication network is the stability of the queues. Stability can be defined rigorously as follows. For an irreducible and aperiodic Markov chain with
countable number of states, the chain is stable if and only if
there is a positive probability for every queue of being empty. Denote by $Q^{\left(t\right)}$ the length of queue $Q$ at the beginning of time slot $t$. Queue $Q$ is said to be stable if \cite{sadek}
\begin{equation}\label{stabilityeqn}
    \lim_{x \rightarrow \infty  } \lim_{t \rightarrow \infty  } {\rm Pr}\{Q^{\left(t\right)}<x\}=1
\end{equation}

In a multiqueue system, the system is stable when {\bf all} queues are stable. We can apply Loynes' theorem to check the stability of a queue \cite{sadek}. This theorem states that if the arrival process and the service process of a queue are strictly stationary, and the average service rate is greater than the average arrival rate of the queue, then the queue is stable. If the average service rate is less than the average arrival rate, then the queue is unstable.

According to the adopted arrival model described in the system model, the queue $Q_\nu$ evolves as follows:
\begin{equation}
    Q_\nu^{t+1}=(Q_\nu^t-D^t_\nu)^+ +A^t_\nu
\end{equation}
where $D_\nu^t$ is the number of departures of queue $Q_\nu$ at time slot $t$, $A_\nu^t$ is the number of arrivals to $Q_\nu$ at time slot $t$, and $(\zeta)^+$ denotes $\max\{\zeta,0\}$.

 The $j${\it th} primary queue is stable when $\lambda_{p_j} < \mu_{p_j}$. Let $\mu_{s_k}$ be the mean service rate of the queue of user $s_k$, $Q_{s_k}$. Recall that the probability of assigning user $s_k$ to band $m_k$ in a certain time slot is $\omega_{{m_k}k}$. The relationship among $\omega_{{m_k}k}$ and the $q$'s can be stated as follows.
\begin{equation}
\begin{split}
\omega_{{m_k}k}=\sum_{{\sim}{m_k}}q(m_1,m_2,\dots,m_{\mathcal{M}_s}), \forall k\in\{1,\dots,\mathcal{M}_s\}
\end{split}
\label{constraints_omega}
\end{equation}
\noindent where the sum is over all indices except $m_k$. The mean service rate of the $j${\it th} PU is given by
\begin{equation}
\mu_{p_j}= \overline{P}_{{\rm out},{j{p_j}}}, \,\ \forall \ j=1,2,\dots,\mathcal{M}_p
\label{eqn4}
\end{equation}

 A packet at the head of user $s_k$ queue is served if the band in which $s_k$ is assigned to is available and the channel to its respective receiver is not in outage. Let $\mathcal{P}_{m_kk}=\pi_{m_k} \overline{P}_{{\rm out},{{m_k}{s_k}}}$, the mean service rate of user $s_k$ is given by:
\begin{equation}
\mu_{s_k}= \sum_{(m_1,m_2,\dots,m_{\mathcal{M}_s})}  q(m_1,m_2,\dots,m_{\mathcal{M}_s}) \ \mathcal{P}_{m_kk}
\label{eqn6}
\end{equation}
where the sum is over all possible assignments of $(m_1,m_2,\dots,m_{\mathcal{M}_s})$. It is worth noting that during the fraction of operation time, $q(m_1,m_2,\dots,m_{\mathcal{M}_s})$, the average service rate of user $s_k$ is $\mathcal{P}_{m_kk}$. Using the relationship among $\omega$'s and $q$'s in (\ref{constraints_omega}), we can rewrite (\ref{eqn6}) as follows:
\begin{equation}
\mu_{s_k}= \sum_{m_k=1}^{\mathcal{M}_p} \omega_{{m_k}k} \ \mathcal{P}_{m_kk}
\label{omeg_for}
\end{equation}
Expression (\ref{omeg_for}) is interpreted as follows. The $k${\it th} SU is served if it is assigned to the primary band $m_k$, which occurs with probability $\omega_{{m_k}k}$, while this band is free/available and the associated channel to the $k${\it th} SU respective receiver is not in outage. The sum in (\ref{omeg_for}) is over the $\mathcal{M}_p$ primary bands.

The stability region is characterized by the closure of rates $(\lambda_{s_1},\lambda_{s_2},\dots,\lambda_{\mathcal{M}_s})$. One method to characterize this closure is to solve a constrained
optimization problem to find the maximum feasible $\lambda_{s_k}$ corresponding
to each feasible $\lambda_{s_\ell}$, $\ell \neq k$, with all the system queues being stable \cite{sadek}. Specifically, for fixed $\lambda_{s_\ell}$, for all  $\ell \neq k$, the maximum stable throughput region is obtained via solving the following optimization problem:

\begin{equation}
\begin{split}
     &  \underset{q(m_1,m_2,\dots,m_{\mathcal{M}_s})\!\ge0}{\max.} \lambda_{s_k}\!=\! \underset{(m_1,m_2,\!\dots\!,m_{\mathcal{M}_s}\!)}{\sum}  \bigg[ q(\!m_1,m_2,\!\dots,m_{\mathcal{M}_s}\!)\ \\& \,\,\,\,\,\,\,\,\,\,\,\,\,\,\,\,\,\,\,\,\,\,\,\,\,\,\,\,\,\,\,\,\,\,\,\,\,\,\,\,\,\,\,\,\,\,\,\,\,\,\,\,\,\,\,\,\,\,\,\,\,\,\,\,\,\,\,\,\,\,\,\,\,\,\,\,\,\,\,\,\,\,\,\,\,\,\,\,\,\,\,\,\,\,\,\,\,\,\,\,\,\,\,\,\,\,\,\,\,\,\,\,\,\,\,\,\,\,\,\,\,\ \times \mathcal{P}_{m_kk}\Bigg] \\& \,\, {\rm s.t.} \sum_{(m_1,m_2,\dots,m_{\mathcal{M}_s})} q(m_1,m_2,\dots,m_{\mathcal{M}_s})=1,\ \\& \,\ \lambda_{s_\ell}\!\le \! \sum_{(m_1,m_2,\dots,m_{\mathcal{M}_s})}   q(m_1,m_2,\dots,m_{\mathcal{M}_s})\ \mathcal{P}_{m_{\ell} \ell}, \forall \ell \neq k
    \end{split}
    \end{equation}
    The optimization problem is a linear program and can be solved using any standard linear programming technique. In order to decrease the total number of optimization variables, we use an equivalent optimization problem, i.e., in terms of $\omega$'s. Defining matrix $\Omega$ such that its $jk$ element is $\omega_{jk}$ and using (\ref{omeg_for}), the optimization problem can be rewritten as follows:
   \begin{equation}
\begin{split}
     &  \underset{\Omega}{\max.}\,\,\ \lambda_{s_k}\!=\! \sum_{m_k=1}^{\mathcal{M}_p} \omega_{{m_k}k} \ \mathcal{P}_{m_kk}\\& \,\, {\rm s.t.} \ 0\le \omega_{{m_h}h}\!\le\!1 \ \forall (m_h,h) ,\ \sum_{m_h}^{\mathcal{M}_p} \omega_{{m_h}h}\le1 \ \forall h,\ \\&\,\,\,\,\,\,\,\,\,\,\,\,\,\  \sum_{h=1}^{\mathcal{M}_s} \omega_{{m_h}h}\le 1 \ \forall m_h, \ \lambda_{s_\ell}\!\le\! \sum_{m_\ell=1}^{\mathcal{M}_p} \omega_{{m_\ell}\ell}\ \mathcal{P}_{m_\ell \ell } \forall  \ell \neq k
    \end{split}
    \end{equation}
  \noindent where $h,\ell\in \{1,2,\dots,\mathcal{M}_s\}$ and $m_h,m_\ell\in \{1,2,\dots,\mathcal{M}_p\}$. The optimization problem is still a linear program, which can be solved efficiently. It has a total number of variables $\mathcal{M}_s \times \mathcal{M}_p$ which is much less than the total number of variables of the original problem, i.e., $\mathcal{M}_s \times \mathcal{M}_p\ll \mathbb{P}$. For the operation of the system, we can obtain $q$'s from $\Omega$'s using Birkhoff algorithm (see, for example, \cite{li2001enhanced} and references therein), which gives the $q$'s or the fraction of time in which a certain users configuration is used. The Birkhoff algorithm is applied on square doubly stochastic matrices.\footnote{A doubly stochastic matrix (also called bistochastic), is a matrix $A=(a_{jk})$ of nonnegative real numbers and each of its rows and columns sums to unity, i.e., $\underset{j}{\sum} a_{jk}=\underset{k}{\sum} a_{jk}=1$.} Therefore, if $\mathcal{M}_s>\mathcal{M}_p$, we can assume there are virtual bands of zero bandwidth to which $\mathcal{M}_s-\mathcal{M}_p$ users are assigned. If $\mathcal{M}_p>\mathcal{M}_s$, we assume that there are virtual SUs with zero-arrival rate and unity outage probability.

    Now we move our attention to the case of two SUs and two PUs (two bands) to obtain some insights and analytical results for the stability region. Since $\mathcal{M}_s=\mathcal{M}_p=2$ and from (\ref{constraints_omega1}) and (\ref{constraints_omega}), $\omega_{12}\!=\!\omega_{21}$\footnote{Since the SUs are assigned different bands each slot, the probability of assigning user $s_1$ to band $2$ is equal to the probability of assigning user $s_2$ to band $1$.}. The stability region is characterized by the closure of rate pairs $(\lambda_{s_1},\lambda_{s_2})$. The optimization problem is stated as:
    \begin{equation}
\begin{split}
     & \underset{\epsilon}{\max.} \,\,\,\,\,\,\,\,\,\,\ \epsilon \mathcal{P}_{12}+ \big(1-\epsilon\big)\mathcal{P}_{22} \\& \,\,\,\ {\rm s.t.}  \,\,\,\,\ \lambda_{s_1}\!\le \!  \epsilon \mathcal{P}_{21} \!+\! \big(1\!-\!\epsilon\big) \mathcal{P}_{11},\,\ 0\le \epsilon\le 1
     \label{opt_two}
    \end{split}
    \end{equation}
    where $\epsilon\!=\!\omega_{12}\!=\!\omega_{21}$ is the probability that user $s_2$ is assigned to band $1$ (or user $s_1$ is assigned to band $2$). The optimization problem can be rearranged as follows
    \begin{equation}
    \label{opt1001}
\begin{split}
     & \underset{\epsilon}{\max.} \,\,\,\,\,\,\,\ \epsilon \big(\mathcal{P}_{12}-\mathcal{P}_{22}\big)\\ &{\rm s.t.}  \,\,\,\,\,\,\,\ \lambda_{s_1}-\mathcal{P}_{11}\!\le \! \epsilon \big(\mathcal{P}_{21}- \mathcal{P}_{11}\big),\,\ 0\le \epsilon\le 1
    \end{split}
    \end{equation}
      The optimal $\epsilon$ depends on the values of $\mathcal{P}_{jk}$ for all $j,k\in\{1,2\}$. Specifically,
    \begin{itemize}
      \item  If $\mathcal{P}_{21}< \mathcal{P}_{11}$ and $\lambda_{s_1}>\mathcal{P}_{11}$; or $\mathcal{P}_{21}> \mathcal{P}_{11}$ and $\lambda_{s_1} > \mathcal{P}_{21}$, the problem is \textbf{infeasible}.
   \item  If $\mathcal{P}_{12}>\mathcal{P}_{22}$, $\mathcal{P}_{21}\ge \mathcal{P}_{11}$, and $\lambda_{s_1} \le \mathcal{P}_{21}$, the optimal value is $\epsilon^*=1$.

   \item  If $\mathcal{P}_{12}>\mathcal{P}_{22}$, $\mathcal{P}_{21}< \mathcal{P}_{11}$ and $\lambda_{s_1}-\mathcal{P}_{11}<0$, the optimal value is $\epsilon^*=\min\big(\frac{ \lambda_{s_1}-\mathcal{P}_{11}}{ \mathcal{P}_{21}- \mathcal{P}_{11}},1\big)$.
      \item   If $\mathcal{P}_{12}<\mathcal{P}_{22}$ and $\mathcal{P}_{21}> \mathcal{P}_{11}$, the optimal value is $\epsilon^*=\max\big(\frac{ \lambda_{s_1}-\mathcal{P}_{11}}{ \mathcal{P}_{21}- \mathcal{P}_{11}},0\big)$.
    \item  If $\mathcal{P}_{12}<\mathcal{P}_{22}$, $\mathcal{P}_{21}< \mathcal{P}_{11}$ and $\lambda_{s_1}\le\mathcal{P}_{11}$, the optimal value is $\epsilon^*=0$.
         \item If $\mathcal{P}_{12}=\mathcal{P}_{22}$, the optimization problem becomes a feasibility problem.
         \end{itemize}

      The stability region is given by
         \begin{equation}\label{2233344}
   \mathcal{R}(\mathcal{S})=\bigg\{(\lambda_{s_1},\lambda_{s_2}):\lambda_{s_2} <  \epsilon^* \mathcal{P}_{12}+ \bigg(1-\epsilon^*\bigg)\mathcal{P}_{22}\bigg\}
\end{equation}

    \section{Comparison Baseline Systems}\label{comparisonsection}
    In this section, we consider two systems for comparison with the proposed system. The first system, denoted by $\mathcal{\hat{S}}$, needs less coordination and cooperation between SUs. Each SU chooses a band randomly at the beginning of the time slot. The probability that user $s_k$ chooses band $i$ is $\Gamma_{ik}$. It is clear that these probabilities satisfy the constraint
\begin{equation}
\sum_{i=1}^{\mathcal{M}_p}\Gamma_{ik}\le 1,\, \forall k \in\{1,\dots,\mathcal{M}_s\}
\end{equation}

It is possible in system $\mathcal{\hat{S}}$ that a band is left unassigned or that several SUs are assigned to the same band. In this system, packet loss is due to either packets collision, when two or more SUs with nonempty queues select the same primary band; or channels outage. The total number of assignment of SUs to bands is given by
\begin{equation}
\mathbb{C}=\mathcal{M}_p^{\mathcal{M}_s}
\end{equation}

System $\mathcal{\hat{S}}$ is less complex than system $\mathcal{S}$ due to the lack of need for strict coordination between the secondary terminals, which is required in $\mathcal{S}$ where one and only one user is given a specific band. Nevertheless, the complexity of obtaining the optimal assignments probability is much higher than system $\mathcal{S}$ because the optimization problem of system $\mathcal{\hat{S}}$ is nonconvex and the total number of optimization parameters is $\mathcal{M}_p^{\mathcal{M}_s}> \mathcal{M}_p\times{\mathcal{M}_s}$ when $\mathcal{M}_p\ge2$ and $\mathcal{M}_p^{\mathcal{M}_s}\gg \mathcal{M}_p\times{\mathcal{M}_s}$ for high $\mathcal{M}_p$ or $\mathcal{M}_s$.

The mean service rate of the $j${\it th} PU is similar in systems $\mathcal{S}$ and $\mathcal{\hat{S}}$. We investigate now the service rate for the SUs. User $s_k$, when assigned to band $B_i$, succeeds in its transmission with probability $\overline{P}_{{\rm out},is_k}$ if the PU operating on $B_i$ has no packets to send and if any secondary terminal assigned to the same band has an empty queue. The mean service rate of user $s_k$ is thus given by
\begin{equation}\label{genwoc}
\begin{split}
\mu_{s_k}=&\sum_{m_1=1}^{\mathcal{M}_p}\sum_{m_2=1}^{\mathcal{M}_p}...\sum_{m_{\mathcal{M}_s}=1}^{\mathcal{M}_p}\Biggr[\Gamma_{m_11}\Gamma_{m_22}...\Gamma_{m_{\mathcal{M}_s}\mathcal{M}_s}\ \mathcal{P}_{m_kk}\\ & \,\,\,\,\,\,\,\,\,\,\,\,\,\,\,\,\,\,\,\,\,\,\,\,\,\,\,\,\,\,\,\,\,\,\,\,\,\,\,\,\,\,\,\,\,\,\,\,\,\,\,\,\,\,\,\,\,\ \times \,{\rm Pr}\Bigg\{ \bigcap^{\mathcal{M}_s}_{ \substack{v=1 \\ v \neq k \\ m_v = m_k}} Q_{s_v} =0 \Bigg\} \Biggr]
\end{split}
\end{equation}
where the sums in (\ref{genwoc}) are over the possible assignments of
each SU.
\begin{proposition}\label{pro2}
For any network with $\mathcal{M}_s$ SUs and $\mathcal{M}_p$ orthogonal primary bands, the stability region of system $\mathcal{S}$, $\mathcal{R}(\mathcal{S})$, contains
that of $\hat{\mathcal{S}}$, $\mathcal{R}(\mathcal{\hat{S}})$. That is, $\mathcal{R}(\mathcal{\hat{S}}) \subseteq  \mathcal{R}(\mathcal{S})$.
\end{proposition}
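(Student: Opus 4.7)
The plan is to take any rate vector $(\lambda_{s_1},\ldots,\lambda_{s_{\mathcal{M}_s}})\in\mathcal{R}(\hat{\mathcal{S}})$ stabilized by some matrix $\Gamma=[\Gamma_{ik}]$ in $\hat{\mathcal{S}}$ and to exhibit a feasible $\Omega$ for system $\mathcal{S}$ whose service rates, given by (\ref{omeg_for}), dominate this vector. The driving identity is the stationary empty-queue formula analogous to (\ref{eqn1}): writing $q_k$ for the stationary empty-queue probability of $Q_{s_k}$ under $\hat{\mathcal{S}}(\Gamma)$, stability of $Q_{s_k}$ forces $1-q_k=\lambda_{s_k}/\mu_{s_k}^{\hat{\mathcal{S}}}$, and substituting the explicit form of $\mu_{s_k}^{\hat{\mathcal{S}}}$ obtained by marginalising (\ref{genwoc}) over the other SUs' band choices yields
\begin{equation*}
\lambda_{s_k}=\sum_{i=1}^{\mathcal{M}_p}\Gamma_{ik}(1-q_k)\beta_{ik}\,\mathcal{P}_{ik},
\end{equation*}
where $\beta_{ik}=\prod_{v\neq k}[1-\Gamma_{iv}(1-q_v)]$ is the probability that no other SU with a nonempty queue has chosen band $i$.

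I would then set $\omega_{ik}:=\Gamma_{ik}(1-q_k)\beta_{ik}$, so that (\ref{omeg_for}) gives $\mu_{s_k}^{\mathcal{S}}(\omega)=\lambda_{s_k}$ for every $k$ by direct substitution. What remains is to check that $\omega$ is a valid parameter matrix of $\mathcal{S}$. Entry-wise nonnegativity and the bound $\omega_{ik}\le 1$ are immediate from the definition, and the row-sum constraint follows from $\sum_{i}\omega_{ik}\le (1-q_k)\sum_{i}\Gamma_{ik}\le 1$ using $\beta_{ik}\le 1$ and the row constraint on $\Gamma$.

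The step I expect to be the main obstacle is the column-sum constraint $\sum_{k}\omega_{ik}\le 1$, since the corresponding sum on $\Gamma$ itself is not available in $\hat{\mathcal{S}}$. I would establish it by reading $\omega_{ik}$ off as a concrete probability in a stationary slot of $\hat{\mathcal{S}}(\Gamma)$: it is exactly the probability that $s_k$ has a nonempty queue, selects band $i$, and every other SU either fails to select band $i$ or has an empty queue --- equivalently, that $s_k$ is the unique nonempty-queue occupant of band $i$ in that slot. Since at most one SU can hold this ``unique nonempty occupant'' status on band $i$ in the same slot, the events indexed by $k$ are pairwise disjoint, and summing their probabilities yields $\sum_{k}\omega_{ik}\le 1$. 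With feasibility of $\omega$ in hand and $\mu_{s_k}^{\mathcal{S}}(\omega)\ge\lambda_{s_k}$ for every $k$, the rate vector lies in $\mathcal{R}(\mathcal{S})$ (taken, as usual, as the closure of stabilizable arrival rates), and since the vector was arbitrary we conclude $\mathcal{R}(\hat{\mathcal{S}})\subseteq\mathcal{R}(\mathcal{S})$.
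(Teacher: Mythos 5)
The step that breaks is your opening ``identity'' $\lambda_{s_k}=\sum_{i}\Gamma_{ik}(1-q_k)\beta_{ik}\,\mathcal{P}_{ik}$ with $\beta_{ik}=\prod_{v\neq k}\bigl[1-\Gamma_{iv}(1-q_v)\bigr]$. This factorization treats the empty/nonempty indicators of the different secondary queues as mutually independent in the stationary regime of $\hat{\mathcal{S}}(\Gamma)$, and (by pulling $\pi_i$ out inside $\mathcal{P}_{ik}$) also treats primary-band availability as independent of secondary occupancy. Neither holds: these are interacting queues --- an SU is blocked exactly when a colliding SU is nonempty, and all SUs favoring a band are starved together while that band's PU is busy --- so the joint probability appearing in (\ref{genwoc}) does not factor into marginals, and the relation $1-q_k=\lambda_{s_k}/\mu^{(\hat{\mathcal{S}})}_{s_k}$ is not a clean identity either, because the service opportunity of $Q_{s_k}$ given that it is nonempty is itself correlated with the other queues' states (unlike the PU case behind (\ref{eqn1})). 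As a result, the quantity you actually define, $\omega_{ik}:=\Gamma_{ik}(1-q_k)\beta_{ik}$, is \emph{not} ``exactly the probability that $s_k$ is the unique nonempty-queue occupant of band $i$'' --- the verbal reading you use for the column-sum (disjointness) argument and the algebraic definition are two different quantities --- and the exact flow balance $\lambda_{s_k}=\sum_i\omega_{ik}\mathcal{P}_{ik}$ is established for neither. Even with the joint-event definition, factoring out $\pi_i$ requires decoupling the event $\{Q_{p_i}=0\}$ from the secondary-side event, which needs an extra argument (e.g., defining $\omega_{ik}$ as the probability of the secondary-side event conditioned on band $i$ being free).

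The skeleton is salvageable, and it is a genuinely different route from the paper's. If you define $\omega_{ik}$ as the exact stationary probability, conditioned on $Q_{p_i}=0$, that $s_k$ is nonempty, selects band $i$, and no other nonempty SU selects band $i$, then flow conservation gives $\lambda_{s_k}=\sum_i\pi_i\overline{P}_{{\rm out},is_k}\,\omega_{ik}$, your disjointness argument gives $\sum_k\omega_{ik}\le 1$, the inclusion of the event in $\{s_k\text{ selects }i\}$ gives $\sum_i\omega_{ik}\le\sum_i\Gamma_{ik}\le1$, and working with interior points of $\mathcal{R}(\hat{\mathcal{S}})$ before taking closures finishes the argument via (\ref{omeg_for}). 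The paper instead avoids stationary distributions entirely: it conditions on the same queue-occupancy pattern in both systems, writes the conditional secondary service rate of $\hat{\mathcal{S}}$ as in (\ref{hatSe}), notes that the resulting matrix is substochastic, and chooses an $\omega$ dominating it entrywise, i.e., a dominance argument at the level of conditional service rates (plus virtual zero-bandwidth bands for $\mathcal{M}_s>\mathcal{M}_p$). As written, however, the independence-based factorization in your proposal is a genuine gap, not a cosmetic one.
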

\begin{proof}
We investigate the system with $\mathcal{M}_p \geq \mathcal{M}_s$ first. Assume the same pattern of queue occupancy in both systems. The mean service rate of user $s_k$ with a nonempty queue is
\begin{equation}
\begin{split}
\mu^{\left(\hat{\mathcal{S}}\right)}_{s_k}&=  \sum_{m_k=1}^{\mathcal{M}_p} \mathcal{P}_{m_kk}\ \Gamma_{{{m_k}k}} \prod_{\substack{{v\in \mathcal{N}}\\{v\ne k}}}(1-\Gamma_{{m_k}v})
\end{split}
\label{hatSe}
\end{equation}
\noindent where $\mathcal{N}$ is the set of SUs with nonempty queues. Note that we use the superscript $\hat{\mathcal{S}}$ to make it clear that expression (\ref{hatSe}) is for system $\hat{\mathcal{S}}$. On the other hand, for system $\mathcal{S}$,
\begin{equation}
\begin{split}
\mu^{\left(\mathcal{S}\right)}_{s_k}&= \sum_{(m_1,m_2,\dots,m_{\mathcal{M}_s})} q(m_1,m_2,\dots,m_{\mathcal{M}_s})\ \mathcal{P}_{m_kk} \\&=\sum_{m_k=1}^{\mathcal{M}_p}  \omega_{m_k k}\ \mathcal{P}_{m_kk}
\end{split}
\label{Se}
\end{equation}
Subtracting (\ref{hatSe}) from (\ref{Se}), we get
\begin{equation}
\begin{split}
\mu^{\left(\mathcal{S}\right)}_{s_k}-\mu^{\left(\hat{\mathcal{S}}\right)}_{s_k}
&= \sum_{m_k=1}^{\mathcal{M}_p}\mathcal{P}_{m_kk} \bigg(\omega_{m_k k} - \Gamma_{{{m_k}k}} \prod_{\substack{{v \in \mathcal{N}}\\{v\ne k}}}(1-\Gamma_{{m_k}v})\bigg)
\end{split}
\end{equation}
%
Note that $\sum_{m_k=1}^{\mathcal{M}_p}\Gamma_{{{m_k}k}} \prod_{\substack{{v \in \mathcal{N}}\\{v\ne k}}}(1-\Gamma_{{m_k}v})$ represents the probability of one user being assigned a certain band with all other users with nonempty queues being assigned to another band. This configuration is a subset of all possible users' assignments which additionally include a situation with two or more users with nonempty queues assigned to a band and the rest of users assigned to another band. This means that the sum given by $\sum_{m_k=1}^{\mathcal{M}_p}\Gamma_{{{m_k}k}} \prod_{\substack{{v \in \mathcal{N}}\\{v\ne k}}}(1-\Gamma_{{m_k}v})$ is less than or equal to $1$. Since $\sum_{m_k=1}^{\mathcal{M}_p}\omega_{m_k k}=1$, we can always find $\omega_{m_k k} \ge \Gamma_{{{m_k}k}} \prod_{\substack{{v \in \mathcal{N}}\\{v\ne k}}}(1-\Gamma_{{m_k}v})$.

 Now if $\mathcal{M}_p\le \mathcal{M}_s$, this case can be seen as a system with $\mathcal{M}_p=\mathcal{M}_s$ with $\mathcal{M}_s- \mathcal{M}_p$ zero-bandwidth bands. Thus, we can infer that $\mathcal{R}(\mathcal{S})$ contains $\mathcal{R}(\hat{\mathcal{S}})$ in all cases. This completes the proof.
\end{proof}

The second system that we compare with is the deterministic (fixed) assignment system in which the SUs are deterministically assigned to the primary bands. That is, each SU is assigned to one of the primary bands for all time. Hence, this system requires that $\mathcal{M}_p\!\ge\! \mathcal{M}_s$.

\begin{proposition}\label{pro1}
For $\mathcal{M}_s$ SUs and $\mathcal{M}_p\ge \mathcal{M}_s$ bands, the stability regions of system $\mathcal{S}$ and $\mathcal{\hat{S}}$ contain that of a fixed assignment.
\end{proposition}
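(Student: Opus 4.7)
The plan is to exhibit any fixed assignment as a degenerate special case of both $\mathcal{S}$ and $\hat{\mathcal{S}}$, so that every rate vector stabilizable under fixed assignment is automatically feasible for the optimization problems defining $\mathcal{R}(\mathcal{S})$ and $\mathcal{R}(\hat{\mathcal{S}})$. Concretely, since $\mathcal{M}_p\ge\mathcal{M}_s$, any deterministic allocation is encoded by an injection $\sigma:\{1,\dots,\mathcal{M}_s\}\to\{1,\dots,\mathcal{M}_p\}$ that pins user $s_k$ to band $B_{\sigma(k)}$ for all slots. Under this rule, $s_k$ gets service only when $B_{\sigma(k)}$ is free and its own link is not in outage, giving $\mu_{s_k}^{(\mathrm{fix})}=\mathcal{P}_{\sigma(k)k}$, and the corresponding stability region is $\{\lambda_{s_k}<\mathcal{P}_{\sigma(k)k}:\forall k\}$, with the overall fixed-assignment region being the union over all such injections.

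Next, I would realize this inside $\mathcal{S}$ by setting $\omega_{jk}=\mathds{1}(j=\sigma(k))$. Because $\sigma$ is injective, each column of $\Omega$ has exactly one $1$ and each row has at most one $1$, which immediately satisfies constraints (\ref{constraints_omega1}) and (\ref{constraints_omega}). Substituting into (\ref{omeg_for}) gives $\mu_{s_k}^{(\mathcal{S})}=\mathcal{P}_{\sigma(k)k}$, so the fixed-assignment rate vector is feasible for the linear program defining $\mathcal{R}(\mathcal{S})$, establishing the first containment.

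For $\hat{\mathcal{S}}$, I would choose $\Gamma_{jk}=\mathds{1}(j=\sigma(k))$. Injectivity of $\sigma$ again ensures no two SUs ever target the same band: if $v\ne k$ and $m_v=m_k=\sigma(k)$, then $\Gamma_{m_k v}=\mathds{1}(\sigma(v)=\sigma(k))=0$, so $(1-\Gamma_{m_k v})=1$ for all $v\ne k$ in (\ref{genwoc}). The only nonzero term in the multisum is the one with $m_k=\sigma(k)$, and the conditional probability of all other users' queues being empty is trivially unity because the intersection in (\ref{genwoc}) is over the empty index set. Thus $\mu_{s_k}^{(\hat{\mathcal{S}})}=\mathcal{P}_{\sigma(k)k}$, proving the fixed-assignment region lies in $\mathcal{R}(\hat{\mathcal{S}})$ as well. (Alternatively, one could invoke Proposition \ref{pro2} and derive the $\mathcal{S}$ containment by transitivity from the $\hat{\mathcal{S}}$ containment.)

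I do not anticipate a genuine obstacle: every step is a verification that the induced $\omega$'s and $\Gamma$'s satisfy the respective constraints and reproduce the same service rates. The only points that require care are (i) noting that the fixed-assignment region is a union of polytopes indexed by $\sigma$, so the containment must be argued for each injection separately and then extended to the union, and (ii) correctly reading (\ref{genwoc}) when the collision-avoidance factor collapses to $1$ under the degenerate $0/1$ choice of $\Gamma$.
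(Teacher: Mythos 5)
Your proposal is correct and follows essentially the same route as the paper: the paper's proof likewise observes that a fixed assignment is the degenerate special case of $\mathcal{S}$ (one $q(\cdot)$ equal to unity, equivalently $0/1$ entries of $\Omega$) and of $\hat{\mathcal{S}}$ ($\Gamma_{ik}$ set to $0/1$), so any rate vector stabilizable under fixed assignment is feasible in both systems. Your write-up merely adds explicit verification of the constraints, the collapse of the collision factor in (\ref{genwoc}), and the union over injections, which the paper leaves implicit.
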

\begin{proof}
The fixed assignment system is a special case of system $\mathcal{S}$ corresponding to the case where the probability $q(m_1,m_2,m_3,\dots,m_{\mathcal{M}_s})$ of the assignment is unity and all the other probabilities are zero. In addition, the fixed assignment system is a special case of system $\hat{\mathcal{S}}$ with  $\Gamma_{ik}$ set to unity when band $i$ is allocated to $s_k$ and zero otherwise. Therefore, both systems $\mathcal{S}$ and $\hat{\mathcal{S}}$ are superior to a fixed assignment.
\end{proof}

\begin{table*}
\begin{center}
\caption{The complement of channels outage for the secondary nodes and the bands availability of the primary bands Used to generate Figs. 2 and 3.}
\label{table2}
\begin{tabular}{  |c |c |c|c|c|c|  }
    \hline\hline
  \hbox{User $s_1$}& \hbox{User $s_2$}&\hbox{User $s_3$}& \hbox{User $s_4$} & \hbox{Band Availability}\\[5pt]\hline
    $\overline{P}_{{\rm out},1{s_1}}=0.6$ &$\overline{P}_{{\rm out},1{s_2}}=0.7$& $\overline{P}_{{\rm out},1{s_3}}=0.6$ &$\overline{P}_{{\rm out},1{s_4}}=0.7$&$\pi_1=1-\frac{\lambda_{p_1}}{\overline{P}_{{\rm out},1p_1}}=0.45$ \\
      $\overline{P}_{{\rm out},2{s_1}}=0.8$ &$\overline{P}_{{\rm out},2{s_2}}=0.6$ &$\overline{P}_{{\rm out},2{s_3}}=0.8$ &$\overline{P}_{{\rm out},2{s_4}}=0.5$ &$\pi_2=1-\frac{\lambda_{p_2}}{\overline{P}_{{\rm out},2p_2}}=0.2$\\
     $ \overline{P}_{{\rm out},3{s_1}}=0.7$ &$\overline{P}_{{\rm out},3{s_2}}=0.8 $ &  $ \overline{P}_{{\rm out},3{s_3}}=0.7$ &$\overline{P}_{{\rm out},3{s_4}}=0.6 $ &$\pi_3=1-\frac{\lambda_{p_3}}{\overline{P}_{{\rm out},3p_3}}=0.6$\\
     $\overline{P}_{{\rm out},4{s_1}}=0.85 $ &$ \overline{P}_{{\rm out},4{s_2}}=0.9 $& $\overline{P}_{{\rm out},4{s_3}}=0.5 $ &$ \overline{P}_{{\rm out},4{s_4}}=0.95 $&$\pi_4=1-\frac{\lambda_{p_4}}{\overline{P}_{{\rm out},4p_4}}=0.4$ \\[5pt]\hline

\end{tabular}
\end{center}
\end{table*}

\section{Numerical Results}\label{numerical}
We provide here some numerical results for the optimization problems presented in this paper. Let $d(m_1,m_2)$ denote the fixed allocation of user $s_1$ to band $m_1$ and user $s_2$ to band $m_2$ in a system with $\mathcal{M}_s\!=\!\mathcal{M}_p\!=\!2$. Fig.\ \ref{fig:subfig1} provides a comparison between the stability regions of systems $\mathcal{S}$, $\hat{\mathcal{S}}$, $d(1,2)$ and $d(2,1)$. The parameters used to generate the figure are: $\overline{P}_{{\rm out},2{s_1}}\!=\!0.8$, $\overline{P}_{{\rm out},2{s_2}}\!=\!0.9$, $\overline{P}_{{\rm out},1{s_1}}\!=\!0.7$, $\overline{P}_{{\rm out},1{s_2}}\!=\!0.85$, and the bands availability are $\pi_{1}\!=\!1\!-\!\frac{\lambda_{p_1}}{\overline{P}_{{\rm out},1p_1}}\!=\! 0.25$ and $\pi_{2}~=\!1\!-\!\frac{\lambda_{p_2}}{\overline{P}_{{\rm out},2p_2}}\!=\!0.875$. From the figure, the advantage of systems $\mathcal{S}$ and $\hat{\mathcal{S}}$ over the deterministic assignment is noted. Also, the advantage of system $\mathcal{S}$ over all the considered systems is noted.

Fig. \ref{fig:subfig2} shows the stability region of system $\mathcal{S}$ in case of $\mathcal{M}_s\!=\!\mathcal{M}_p\!=\!4$. The figure reveals the impact of increasing the mean arrival rate of users $s_3$ and $s_4$ on the stability region of users $s_1$ and $s_2$. As shown in the figure, the increase in the mean arrival rates of users $s_3$ and $s_4$ reduces the stability region of users $s_1$ and $s_2$. The parameters used to generate the figure are depicted in the figure's caption and Table \ref{table2}. Fig. \ref{fig:subfig3} presents the optimal assignment probabilities for system $\mathcal{S}$ for the given parameters in the figure's caption. The parameters used to generate the figure are: $\mathcal{M}_s\!=\!\mathcal{M}_p\!=\!3$, $\lambda_{s_3}\!=\!\lambda_{s_4}\!=\!0.35$ packets per time slot and the first three rows and columns of users $s_1$, $s_2$ and $s_3$ in Table \ref{table2}. It can be noted that as the mean arrival rate of the second user, $s_2$, increases, $q^*(1,3,2)$ and $q^*(2,3,1)$ increase as well, which denote the probability that user $s_2$ is allocated to the third band. This is because the third band provides the highest $\mathcal{P}_{jk}$ for user $s_2$, i.e., $\mathcal{P}_{32}>\mathcal{P}_{j2}$ for $j=1,2$, and user $s_2$ needs to increase its service rate to maintain its queue stability. Similarly, as the mean service rate of user $s_1$ increases, the probabilities $q^*(3,2,1)$ and $q^*(3,1,2)$ increase for the same reason mentioned before for user $s_2$.

\begin{figure}
	\centering
		\includegraphics[scale=0.5]{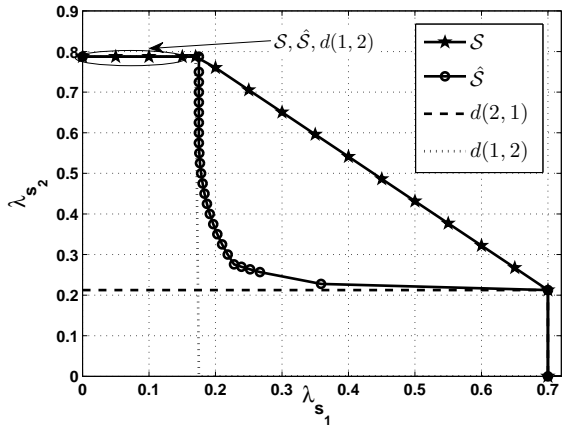}
	\caption{Stability regions of the considered systems.}
	\label{fig:subfig1}
\end{figure}
\begin{figure}
	\centering
		\includegraphics[scale=0.5]{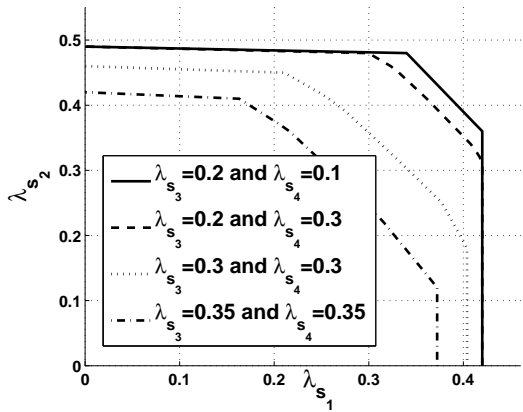}
	\caption{Stability region of system $\mathcal{S}$. The parameters used to generate the figure are: $\mathcal{M}_s\!=\!\mathcal{M}_p\!=\!4$ and Table \ref{table2}.}
	\label{fig:subfig2}
\end{figure}
\begin{figure}
	\centering
		\includegraphics[scale=0.5]{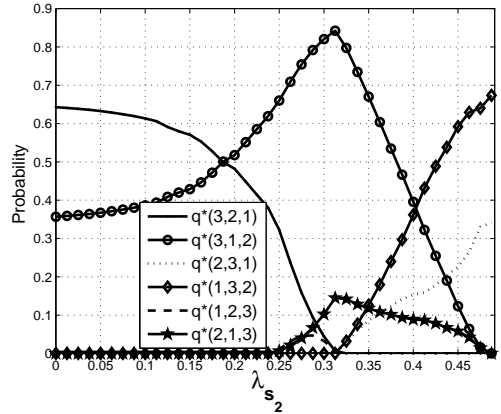}
	\caption{The optimal SUs' allocation probabilities for system $\mathcal{S}$ in case of $\mathcal{M}_s\!=\!\mathcal{M}_p\!=\!3$. The parameters used to generate the figure are $\lambda_{s_3}\!=\!\lambda_{s_4}\!=\!0.35$ packets per time slot and the first three rows and the columns of users $s_1$, $s_2$ and $s_3$ in Table \ref{table2}.}
	\label{fig:subfig3}
\end{figure}

\section{conclusions}\label{conc}
We have proposed a band allocation scheme for buffered cognitive radio users in presence of orthogonal licensed primary bands each of which assigned to a PU.
The cognitive radio users are allocated to bands based on their queue stability requirements.
 We have proved the advantage of the proposed scheme over some well-known schemes.
 Future research for system $\mathcal{S}$ can be directed at one of the following points. 1) Considering systems with multiple assignment within one slot. More specifically, the assignment of users happens multiple time per slot to satisfy all users. The knowledge of the transmit CSI can enhance the system performance and allow bands exchange among users; 2) allowing priority among SUs such that multiple users can be assigned to the same band with different priority in band accessing. The priority of transmission can be established by making the lower priority user sense the higher priority user activity for certain time duration within the slot; or 3) another possible extension is to study the impact of sensing errors on the system's performance. For system $\hat{\mathcal{S}}$, the extension can be directed in terms of 1) adding multipacket reception to the receiving node; or 2) allowing band selection at different time instants per slot followed by sensing duration to avoid perturbing the current transmission.
\bibliographystyle{IEEEtran}
\bibliography{IEEEabrv,bandsbib}
\balance

\end{document}